\documentclass[pdflatex,sn-mathphys-num]{sn-jnl}


\usepackage{graphicx}%
\usepackage{multirow}%
\usepackage{amsmath,amssymb,amsfonts}%
\usepackage{amsthm}%
\usepackage{mathrsfs}%
\usepackage[title]{appendix}%
\usepackage{xcolor}%
\usepackage{textcomp}%
\usepackage{manyfoot}%
\usepackage{booktabs}%
\usepackage{algorithm}%
\usepackage{algorithmicx}%
\usepackage{algpseudocode}%
\usepackage{listings}%
\usepackage{braket}


\theoremstyle{thmstyleone}%
\newtheorem{theorem}{Theorem}
%

\theoremstyle{thmstyletwo}%
\newtheorem{remark}{Remark}%
\newtheorem{corollary}{Corollary}

\theoremstyle{thmstylethree}%

\raggedbottom

\begin{document}

\title[Optimal phase change for Grover's algorithm]{Optimal phase change for a generalized Grover's algorithm}


\author[1]{\fnm{Christopher} \sur{Cardullo}}\email{cacardul@ncsu.com}

\author[1]{\fnm{Min} \sur{Kang}}\email{mkang2@ncsu.edu}

\affil[1]{\orgdiv{Department of Mathematics}, \orgname{North Carolina State University}, \orgaddress{\street{2108 SAS Hall}, \city{Raleigh}, \postcode{27695}, \state{North Carolina}, \country{United States of America}}}


\abstract{We study the generalized Grover's algorithm with an arbitrary amplitude vector to find the optimal phase change for maximizing the gain in probability for the target of each iteration. In the classic setting of Grover's algorithm with a real initial amplitude vector, we find that a phase change of $\pi$ stays optimal until the probability of observing the target is quite close to 1. We provide a formula for identifying this cut-off point based on the size of the data set. When the amplitude is truly complex, we find that the optimal phase change depends non-trivially on the complexity of the amplitude vector. We provide an optimization formula to identify the required optimal phase change.}

\keywords{quantum computing, generalized Grover's algorithm, optimization phase change}


\pacs[MSC Classification]{68Q12, 81P68, 68W01, 68Q99, 90C15}

\maketitle

\section{Introduction}\label{sec1}

Search algorithms are an important tool in mathematics, computer science, and other related areas, as they allow for a space to be searched for either one or multiple target elements that match a given criteria. Applications of this include optimization, root finding, or any problem where verification of a solution is easy but finding a solution is difficult. Two broad categories we can break search algorithms into are structured search algorithms, where we have some prior knowledge of the underlying data set, and unstructured search algorithms, where we do not. Unstructured search is of particular interest in the field of quantum computation because it shows a clear advantage that quantum computers provide. Proved by Grover in 1996 \cite{Gro1}, by exploiting quantum superposition and entanglement, the time required to perform an unstructured search can be brought down from $\mathscr{O}(N)$ to $\mathscr{O}(\sqrt{N})$ for a database of size $N$. Grover's algorithm is an example of an amplitude amplification method, and it has been shown to have the optimal complexity level a quantum unstructured search can achieve, that of $\mathscr{O}(\sqrt{N})$\cite{BBBV}. 

There are other major methods adjacent to unstructured search in quantum computation that arise from different settings, such as quantum walk search and the collision detection method. Similar to the use of random walks in classical computing, quantum random walks can be used for searching unstructured spatial data in quantum computing. A good use of this is in simulating controlled dynamics in quantum systems. Collision detection algorithms are named for their use in collision identification problems where we seek to identify the inputs that map to the same output for a non-injective function. One of the most well-known of this approach is the Brassard-H\o yer-Tapp algorithm \cite{BHT}. Grover's algorithm and these other unstructured search algorithms altogether leverage entanglement and superposition to achieve the speedups we see over classical algorithms.

In this paper, we focus on Grover's algorithm and its generalizations. These approaches are all types of amplitude amplification algorithms, which take an initial amplitude vector and increase the amplitude for a target state in certain ways. Since in a quantum computing setting, the probability of observing a given state is the modulus squared of its amplitude, this process will increase the probability of observing the target state. Grover's algorithm does this by utilizing Householder reflections to increase the amplitude of a target element iteratively. The system is initialized using the Hadamard gate to induce a uniform distribution on the data set to begin with. This choice is natural, considering that we do not know the underlying distribution of the data.

Grover's algorithm is the subject of much research from various points of view. Some research has focused on implementation by using distributed quantum computing \cite{QLX}, others have focused on limiting what the user has control over while still implementing the algorithm \cite{RJS}, and some have researched using partial oracles to speed up the algorithm \cite{LQZY}. There has also been continued interest in understanding the mathematics of Grover's algorithm. This sort of research includes comparisons of different generalizations of Grover's algorithm \cite{TDN}, and how the algorithm behaves when searching a database that contains multiple targets \cite{BBHT}\cite{Sz}. Another area of interest has been research into phase matching, and its implications on a multiple target search \cite{LL}\cite{ZZDW}\cite{GTP}.

We will discuss the details of our research shortly, but to first touch upon areas adjacent to the work we did, we need to mention the studies of arbitrary initial amplitudes, and phase changes that differ from $\pi$. Both of these have been studied for different purposes. Arbitrary initial amplitudes, both pure and mixed states, have been studied to show the robustness of Grover's algorithm \cite{BBBGL}\cite{BK}. Research into phase changes that are different from $\pi$ have been done to understand their effects on amplitude amplification \cite{Ho1} and to improve the performance of Grover's algorithm when the size of the database is small and accuracy becomes more important \cite{Lo1}. Other research has also been done that considers both non-uniform initial amplitudes and phase changes that differ from $\pi$ to analyze Grover's algorithm and its generalizations \cite{Sz}. However, as far as we know, there have not been any results in the direction of optimizing the phase change at each iteration to maximize the probability of the target state.

Speaking to our research, we are interested in optimizing the phase change chosen at every step of the iteration based on the current amplitude vector. What we have done is to calculate the amplitude vector after any given iteration as a function of the size of the database, the previous amplitude, and the phase change chosen. This is proved in Theorem \ref{thm:GeneralProbability}. We then optimize this function over the phase change using this single lever of control we have to maximize the probability of observing the target, presented in Corollary \ref{cor:argmax}. We find that in the case of an initial state induced by the Hadamard gate, regular Grover's algorithm is the optimal choice for a sufficiently large database until the probability of observing the target gets close to 1. The specific region where the optimal phase change strays away from $\pi$ is identified as $\textbf{R}_2$ in Theorem \ref{th:RealVec}. If the initial state is induced differently though, and the initial amplitude is complex, then the optimal phase change depends on the complexity of that initial amplitude, and it is non-trivially different from $\pi$. This relationship is shown in Figure \ref{fig:1}, and the optimization formula is found in \eqref{eq:Optimizaiton}.

Having identified $\textbf{R}_2$, we further investigate when this region is reached for Hadamard initialized states in Corollary \ref{cor:probtarget}. This study was motivated by the unexpected observation that the optimal phase change remained $\pi$ for longer than anticipated. We therefore determine the exact threshold at which the optimal phase change drifts away from $\pi$, resolving this completely in section 4.

Lastly, we want to provide a brief description on how the rest of the paper is structured, to help the readers. In section 2, we provide necessary preliminaries on Grover's algorithm, and Long's generalization. In section 3, we derive the optimization formula for the probability of our target state given an arbitrary amplitude vector, found in \eqref{eq:Optimizaiton}. We discuss the relationship we find here in figures \ref{fig:1} and \ref{fig:rough_change}. In section 4 we apply this approach to the regular Grover's algorithm setting with an initialization performed by the Hadamard gate. And in section 5, we discuss the conclusion and our further research interests and directions.

\section{Preliminaries}\label{sec2}
Consider the space $\mathbf{S}$ with $|\textbf{S}| = 2^n = N$ which contains all elements we are searching over. We will assign each element in $\textbf{S}$ a unique number between $0$ and $N-1$ (ordering does not matter). Let $\tau \in \textbf{S}$ be the target element we are searching for. Consider the standard basis for $\mathbb{C}^2$, renamed as $\textbf{e}_1 = \ket{0}$ and $\textbf{e}_2 = \ket{1}$. $\ket{\cdot}$ and $\bra{\cdot}$ are called the Dirac Bra-Ket notation, with $\ket{\cdot}$ being a column vector and $\bra{\cdot}$ being a row vector intuitively. We also can use the Dirac Bra-Ket notation in discussion of binary representation. Take for example $\ket{5}_3$, with the subscript 3 standing for the number of binary places we are working with. So we have that $\ket{5}_3 = \ket{101}_3 = \ket{1}\otimes \ket{0}\otimes \ket{1} = [0\ 0\ 0\ 0\ 1\ 0\ 0]^\top$.

Consider the space $\mathbb{C}^{2\otimes n}$. Its basis will be $\{\ket{x}_n\}$ with $x\in\{0,1,\dots,N-1\}$ and written in binary format. We will consider the subspace $\textbf{E}_n\subset\mathbb{C}^{2\otimes n}$ s.t. $\sum_{0\leq x < N} a_x\ket{x}_n = \textbf{a}\in \textbf{E}_n$ if $\sum_{0\leq x<N}|a_x|^2 = 1$. In the context of quantum mechanics, our n-qubits are in $\textbf{E}_n$, and we define $a_x$ to be the amplitude associated with $\ket{x}_n$ and $|a_x|^2$ is the probability that upon observation, our n-qubit will collapse to $\ket{x}_n$. Consider $\textbf{a}\in \textbf{E}_2$. Then $\textbf{a} = a_0\ket{00}_2 + a_1\ket{01}_2+a_2\ket{10}_2+a_3\ket{11}_2 =$ $[a_0\ a_1\ a_2\ a_3]^\top$. Thus we can translate an element from $\textbf{E}_n\subset\mathbb{C}^{2\otimes n}$ to one in $\mathbb{C}^{2n}$.

The last crucial aspect of quantum computing we need to cover are gates and how we do operations on states. Firstly, we have that all gates can be represented as matrices that act on states, which as we showed just before, can be represented as column vectors in $\mathbb{C}^{2n}$. Next, there can be no information loss in the system, meaning that all gates must be reversible, and hence the matrices must be Hermitian.

Now we will have a more in-depth discussion of Grover's algorithm, and a generalized Grover's algorithm proposed by Long. Grover's algorithm is comprised of the following steps \cite{Gro1},

\begin{enumerate}
    \item Use the Hadamard gate to equalize the amplitude for all qubits
    \item Rotate the amplitude of the target element $\ket{\tau}$ by $\pi$ radians. This process is represented by the operator $I^{\pi}_{\ket{\tau}}$. We do this using the $U_f$ operator.
    \item Apply the product of operators $FI^{\pi}_{\ket{0}}F$. $F$ is  the Fourier transform matrix, and is defined as $2^{-n/2}(-1)^{\bar{i}\cdot \bar{j}}$ with $n$ being the number of qubits in the system, $\bar{i}\cdot \bar{j}$ being the binary dot product of $i$ and $j$, and our index starting at 0. We refer to this product of operators as the operator $W$.
\end{enumerate}

Steps 2 and 3 are repeated until $P(\ket{\tau})$ is sufficiently large. We call this the Grover iterate, and write it as $G \equiv WU_f\otimes I_2$. It can be shown that the optimal number of iterations with the classical Grover's algorithm is $\frac{\pi}{4}\sqrt{N}$. Note, that if we were to do more than this, the probability for the target would in fact start to decrease, so it is important that we know the optimal number of times the Grover iterate is applied. In fact, Grover also showed that this search can be used with any unitary gate \cite{Gro2}.

Building off of this work, Long generalized this by allowing for a phase change that is not strictly $\pi$ \cite{Lo1}. His motivation for this was from the case that the database is relatively small, hence accuracy becomes most important, a phase change of $\pi$ is not optimal. The way he implemented these arbitrary phase rotations is by using the following operator which generalizes classical Grover's algorithm with arbitrary phase $\phi$.

\[I^{\phi}_{\ket{\tau}} = I+(e^{i\phi}-1)\ket{\tau}\bra{\tau}\]

We can see here that if $\phi = \pi$, then we get a so-called Householder reflection, which is the exact reflections used in the classical Grover's algorithm. To simplify the computation further, we reduce the iteration into 2-dimensional linear maps by adopting the approach used by Long. We condense the non-target elements into a single basis vector $\ket{a}$, observing that they all go through the same change in amplitude in each iteration. Note that it is properly normalized so that the existing computations get simpler. This new basis vector $\ket{a}$ is defined as

\[\ket{a} = \frac{1}{\sqrt{N-1}}\sum_{i\neq \tau} \ket{i}\] 

This condenses our amplitude vector into a vector of $\mathbb{C}^2$, and lets us calculate the amplitudes after one iteration using the following linear map,

\begin{equation}
    \begin{bmatrix}
    -e^{i\phi}(1+(e^{i\phi}-1)\sin^2\beta) & -(e^{i\phi}-1)\sin\beta \cos\beta\\
    -e^{i\phi}(e^{i\phi}-1)\sin\beta \cos\beta & -e^{i\phi}+(e^{i\phi}-1)\sin^2\beta
\end{bmatrix}\begin{bmatrix}
    Amp_{\ket{\tau}}\\
    Amp_{\ket{a}}
\end{bmatrix}\label{eq:Iteration Matrix}
\end{equation}

\noindent where $\phi$ is the phase rotation, and $Amp_{\ket{\tau}}$ and $Amp_{\ket{a}}$ are the amplitudes for the target element $\ket{\tau}$ and the other element $\ket{a}$ made of the non-target elements properly normalized. For further simplification in future computations, we introduce $\beta$ defined as

\begin{equation}
    \sin\beta = \frac{1}{\sqrt{N}}\quad\quad \cos\beta = \sqrt{\frac{N-1}{N}}\label{eq:BetaDesc}
\end{equation}

From \eqref{eq:Iteration Matrix}, note that if the initial amplitude is known, then the amplitude and probability of observation can be calculated at any step of the iteration mathematically, which allows further optimization through classical computing.

\section{Optimizing the probability of a generic step}\label{sec4}

Before fully investigating optimal phase change for generalized Grover's algorithm, let us go over optimizing phase change for the very first step when starting from the Hadamard gate. In this simple computation, we can easily identify the classical Grover's phase change arising as optimal in the first iteration. This straightforward derivation serves as motivation and a preindication of the broader task addressed in this section.

Recall the generalized Grover's updating linear map given in \eqref{eq:Iteration Matrix} as well as the corresponding initial vector $\mathbf{v}_0^H$ given as $\mathbf{v}_0^H = [\sin\beta\cos\beta]^\top = [\frac{1}{\sqrt{N}}\ \sqrt{\frac{N-1}{N}}]^\top$ (see \eqref{eq:BetaDesc}) from the Hadamard gate.

From \eqref{eq:Iteration Matrix} and \eqref{eq:BetaDesc}, we acquire the probability of the target $\ket{\tau}$ as

\[P(\ket{\tau}) = |e^{i\phi}((1-e^{i\phi})\sin^2\beta -1)\sin\beta - (e^{i\phi}-1)\cos^2\beta\sin\beta|^2\]

\noindent which simplifies to

\[P(\ket{\tau})=\sin^2\beta|((1-e^{i\phi})\sin^2\beta-1)+(e^{-i\phi}-1)\cos^2\beta|^2\].

\noindent Letting $\eta$ denote the quantity inside the norm, we compute its real and imaginary parts

\[Re(\eta) = \sin^2\beta(1-\cos\phi)-1+\cos^2\beta(\cos\phi-1)\]
\[=\cos(2\beta)\cos\phi-\cos(2\beta)-1\]
\[Im(\eta) = -(\sin^2\beta+\cos^2\beta)\sin\phi = -\sin\phi\]

\noindent Thus, the target probability becomes

\[P(\ket{\tau}) = \sin^2\beta((\cos(2\beta)\cos\phi-\cos(2\beta)-1)^2+\sin^2\phi)\]

\noindent Recalling that $\sin\beta = \frac{1}{\sqrt{N}}$ and $\cos\beta=\sqrt{\frac{N-1}{N}}$, we let $a=\cos(2\beta)=1-\frac{2}{N}\in(0,1)$ and rewrite \noindent $P(\ket{\tau})$ accordingly.

\[P(\ket{\tau}) = \frac{1-a}{2}((a\cos\phi-a-1)^2+1-\cos^2\phi)\]

\noindent Since the dependence on $\phi$ is only through $\cos\phi$, by letting $u=\cos\phi\in[-1,1]$, we get the probability $P(\ket{\tau})$ as a function of $u$ on $[-1,1]$.

\[P(\ket{\tau}) = \frac{1-a}{2}((a^2-1)u^2-2a(a+1)u+(a+1)^2+1)\]
\[=-\frac{(a-1)^2(a+1)}{2}\biggl(u^2-2\frac{a}{a-1}u+\frac{(a+1)^2+1}{a^2-1}\biggr)\]

This quadratic polynomial in $u$ is maximized when $u=\frac{a}{a-1} = 1-\frac{N}{2}$. Since maximization is done on the interval $[-1,1]$, if $N\geq 4$, which ensures $\frac{a}{a-1} = 1-\frac{N}{2}\leq -1$, then the maximum is achieved when $u=-1$, in other words, $\phi=\pi$. This proves that in the generalized Grover's algorithm initialized with the Hadamard gate, the optimal phase change in the very first iteration is $\pi$, which is the phase used in the classical Grover's algorithm.

However, after each iteration, the amplitude vector evolves, meaning subsequent phase changes must be optimized for arbitrary complex input vectors. We now turn to this general case.

To calculate the optimal angle for the probability of our target at any step, we first need to derive a formula for the probability. Let us start with describing the amplitude of a generic state, and for this, we will be using the setup made in section \ref{sec2}. We remind the reader that the amplitudes satisfy the following:

\begin{enumerate}
    \item The amplitudes are complex,
    \item The sum of their modulus squared is equal to 1, i.e. $|Amp_{\ket{\tau}}|^2 + |Amp_{\ket{a}}|^2=1$.
\end{enumerate}

Therefore, we know there exists an $\alpha$ and $\theta_1,\theta_2$ in $\mathbb{R}$ such that $Amp_{\ket{\tau}} = \sin(\alpha)e^{i\theta_1}$ and $Amp_{\ket{a}} = \cos(\alpha)e^{i\theta_2}$. Using this as our amplitude vector, we get

\[\begin{bmatrix}
    \sin\alpha\ e^{i\theta_1}\\
    \cos\alpha\ e^{i\theta_2}
\end{bmatrix} = e^{i\theta_2}\begin{bmatrix}
    \sin\alpha\ e^{i(\theta_1-\theta_2)}\\
    \cos\alpha
\end{bmatrix}\]

Then because the shared phase changes on both vectors do not impact the probability, we will drop $e^{i\theta_2}$, leaving us with our generic amplitude vector of the following form,

\begin{equation}
    \mathbf{v}_m = \begin{bmatrix}
    \sin(\alpha)e^{i\theta}\\
    \cos(\alpha)
\end{bmatrix}\label{eq:GenAmp}\end{equation}

\noindent where $\theta = \theta_1 - \theta_2$ and $\alpha\in[0,\frac{\pi}{2}]$. From \eqref{eq:Iteration Matrix} note that the generalized Grover's iteration matrix is as follows:

\[\mathbf{A}^\phi = \begin{bmatrix}
    -e^{i\phi}(1+(e^{i\phi}-1)\sin^2\beta) & -(e^{i\phi}-1)\sin\beta \cos\beta\\
    -e^{i\phi}(e^{i\phi}-1)\sin\beta \cos\beta & -e^{i\phi}+(e^{i\phi}-1)\sin^2\beta
\end{bmatrix}\]
\begin{equation}
    = \begin{bmatrix}
    e^{i\phi}\Big(\frac{1-e^{i\phi}}{N}-1\Big) & \frac{\sqrt{N-1}}{N}(1-e^{i\phi})\\
    \frac{\sqrt{N-1}}{N}e^{i\phi}(1-e^{i\phi}) & -\Big(\frac{1}{N}+\Big(1-\frac{1}{N}\Big)e^{i\phi}\Big)
\end{bmatrix}\label{eq:IterationMatrix2}\end{equation}

\hfill

\begin{theorem}
    \label{thm:GeneralProbability}
    Suppose we carry out the generalized Grover's iteration defined in \eqref{eq:IterationMatrix2} with the general initial vector given by $\mathbf{v}_m^\top = [\sin(\alpha)e^{i\theta},\ \cos(\alpha)]^\top$ from \eqref{eq:GenAmp}, then $P(\ket{\tau})$, the probability of the target vector $\ket{\tau}$, is given by the following function of $\phi$, $\alpha$, and $N$,
    \begin{align*}
    P(\ket{\tau}) = &\ \frac{2\sqrt{N-1}}{N}\biggl[\biggl(\frac{\sqrt{N-1}}{N}\cos(2\alpha)+\frac{1}{N}\sin(2\alpha)\cos(\phi+\theta)\biggr)(1-\cos\phi)-\\
    &-\frac{1}{2}\sin(2\alpha)\cos(\phi+\theta)\biggr]+\frac{\sqrt{N-1}}{N}\sin(2\alpha)\cos(\theta)+\sin^2\alpha
\end{align*}
\end{theorem}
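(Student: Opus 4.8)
The plan is a direct, if lengthy, computation. Applying the iteration matrix $\Aphi$ of \eqref{eq:IterationMatrix2} to $\vm$ and reading off the first coordinate gives the post-iteration target amplitude
\[
z \;=\; e^{i\phi}\Bigl(\tfrac{1-e^{i\phi}}{N}-1\Bigr)\sin(\alpha)\,e^{i\theta}\;+\;\frac{\sqrt{N-1}}{N}\bigl(1-e^{i\phi}\bigr)\cos\alpha ,
\]
and $P(\ket{\tau})=|z|^2$. Writing $z=A+B$, with $A$ the first summand and $B$ the second, I would expand $|z|^2=|A|^2+|B|^2+2\,\mathrm{Re}(A\overline{B})$ and treat the three pieces separately.

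For $|A|^2$, since $|e^{i\phi}|=|e^{i\theta}|=1$ and $\alpha\in[0,\tfrac{\pi}{2}]$, only the modulus of $\tfrac{1-e^{i\phi}}{N}-1=\tfrac{(1-N)-e^{i\phi}}{N}$ matters, giving $|A|^2=\tfrac{(N-1)^2+2(N-1)\cos\phi+1}{N^2}\sin^2\alpha$. For $|B|^2$ I would use $|1-e^{i\phi}|^2=2(1-\cos\phi)$ to get $|B|^2=\tfrac{2(N-1)}{N^2}(1-\cos\phi)\cos^2\alpha$. For the cross term, the key simplifications are $e^{i\phi}\,\overline{(1-e^{i\phi})}=e^{i\phi}-1=-(1-e^{i\phi})$, which collapses the $e^{i\phi}$ prefactor and leaves
\[
A\overline{B}=-\frac{\sqrt{N-1}}{N}\sin\alpha\cos\alpha\,e^{i\theta}\Bigl(\tfrac{(1-e^{i\phi})^2}{N}-(1-e^{i\phi})\Bigr),
\]
together with the identity $(1-e^{i\phi})^2=-2e^{i\phi}(1-\cos\phi)$, which reduces every remaining exponential to $\cos\theta$ and $\cos(\phi+\theta)$. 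Taking real parts and using $2\sin\alpha\cos\alpha=\sin(2\alpha)$ then puts $2\,\mathrm{Re}(A\overline{B})$ into the form $\tfrac{\sqrt{N-1}}{N}\sin(2\alpha)\bigl[\tfrac{2}{N}(1-\cos\phi)\cos(\phi+\theta)+\cos\theta-\cos(\phi+\theta)\bigr]$, which already supplies all of the $\theta$-dependent terms in the claimed formula, in particular the lone $\tfrac{\sqrt{N-1}}{N}\sin(2\alpha)\cos\theta$.

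Finally I would substitute $\sin^2\alpha=\tfrac{1-\cos(2\alpha)}{2}$ and $\cos^2\alpha=\tfrac{1+\cos(2\alpha)}{2}$ into $|A|^2+|B|^2$ and collect coefficients. The $\alpha$-independent part telescopes via $(N-1)^2+2(N-1)+1=N^2$ to exactly $\tfrac12$, and the coefficient of $\cos(2\alpha)$ collapses via $-(N-1)^2+2(N-1)-1=-(N-2)^2$ and $4(N-1)-N^2=-(N-2)^2$ to $\tfrac{2(N-1)}{N^2}(1-\cos\phi)-\tfrac12$, so that $|A|^2+|B|^2=\tfrac{2(N-1)}{N^2}(1-\cos\phi)\cos(2\alpha)+\sin^2\alpha$. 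Adding back $2\,\mathrm{Re}(A\overline{B})$ and factoring $\tfrac{2\sqrt{N-1}}{N}$ out of the $\cos(2\alpha)$ term and of the $(1-\cos\phi)\cos(\phi+\theta)$ and $\cos(\phi+\theta)$ contributions yields precisely the stated expression. There is no conceptual difficulty here; the only real obstacle is disciplined bookkeeping — keeping the three contributions separate, applying the two exponential identities consistently, and verifying the two cancellations in $N$ that collapse the $\alpha$-independent part to $\tfrac12$ and the $\cos(2\alpha)$-coefficient to a clean multiple of $(N-2)^2$. A useful consistency check along the way is that specializing to $\alpha=\beta$, $\theta=0$, $\phi=\pi$ recovers the first-iteration target probability computed at the start of this section.
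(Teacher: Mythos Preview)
Your proposal is correct, and the approach is genuinely different from the paper's. The paper writes the full target amplitude as a single complex number $\dagger$, extracts $\mathrm{Re}(\dagger)$ and $\mathrm{Im}(\dagger)$, squares and adds, and then pushes through several pages of half-angle identities in $\phi/2$ before regrouping into three blocks labeled $\textbf{\textit{a}}$, $\textbf{\textit{b}}$, $\textbf{\textit{c}}$ and simplifying each. You instead split the amplitude as $z=A+B$ along the two matrix entries and use $|z|^2=|A|^2+|B|^2+2\,\mathrm{Re}(A\overline{B})$, staying with complex exponentials throughout; the identities $e^{i\phi}\overline{(1-e^{i\phi})}=-(1-e^{i\phi})$ and $(1-e^{i\phi})^2=-2e^{i\phi}(1-\cos\phi)$ do the work that the paper's half-angle manipulations do, so the cross term lands directly on $\cos\theta$ and $\cos(\phi+\theta)$ without any intermediate $\theta+\phi/2$ terms. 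The payoff of your route is economy: the $\theta$-dependence is isolated entirely in $2\,\mathrm{Re}(A\overline{B})$, and the two ``miraculous'' cancellations in $N$ that collapse $|A|^2+|B|^2$ are made explicit. The paper's route, while longer, has the minor advantage that its intermediate decomposition $P(\ket\tau)=(\sin^2\alpha)\textbf{\textit{a}}+(\cos^2\alpha)\textbf{\textit{b}}+\sin(2\alpha)\sin(2\beta)\sin(\tfrac{\phi}{2})\textbf{\textit{c}}$ is expressed entirely in the $\beta$ variable before substituting $N$, which connects more visibly to the form \eqref{eq:ProbTargetUpdate}.
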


\begin{proof}

To calculate the updated probability for our target element after one iteration, we find the first element of $\mathbf{A}^\phi \mathbf{v}_m$ and square the modulus, which is

\[P({\ket{\tau}}) = \biggl|\sin\alpha\ e^{i\theta}e^{i\phi}\biggl(\frac{1-e^{i\phi}}{N}-1\biggr)+\cos\alpha\frac{\sqrt{N-1}}{N}(1-e^{i\phi})\biggr|^2\]
\smallskip
\[=\biggl|\sin\alpha\ e^{i\theta}(\sin^2\beta(1-e^{i\phi})-1)+\cos\alpha\cos\beta\sin\beta(e^{-i\phi}-1)\biggr|^2\]

With further algebraic simplification, this reduces to

\begin{equation}\begin{split}
   P({\ket{\tau}})  =&\ \frac{1}{4}\biggl|-\sin\alpha(\cos(2\beta)+1)e^{i\theta}-\cos\alpha\sin(2\beta) + e^{i(\theta+\phi)}(\cos(2\beta)-1)\sin\alpha+ \\
    & + e^{-i\phi}\cos\alpha\sin(2\beta)\biggr|^2\label{eq:ProbTarget}
\end{split}
\end{equation}

To organize the modulus calculation, we denote

\[\begin{split}
    \dagger =& -\sin\alpha(\cos(2\beta)+1)e^{i\theta}-\cos\alpha\sin(2\beta) + e^{i(\theta+\phi)}(\cos(2\beta)-1)\sin\alpha+\\
    &+ e^{-i\phi}\cos\alpha\sin(2\beta)
\end{split}
\]

\[Re(\dagger) = -\Big[ \sin\alpha\Big\{(\cos\theta + \cos(\phi+\theta))+\cos(2\beta)(\cos\theta-\cos(\phi+\theta))\Big\} + \cos\alpha\Big\{\sin(2\beta)(1-\cos\phi)\Big\}\Big]\]
\[Im(\dagger) = -\Big[\sin\alpha\Big\{(\sin\theta+\sin(\phi+\theta))+\cos(2\beta)(\sin\theta-\sin(\phi+\theta))\Big\}+\cos\alpha\Big\{\sin(2\beta)\sin\phi\Big\}\Big]\]

Calculating the modulus squared of $\dagger$, we get
\begin{equation*}
    \begin{split}
        |\dagger|^2 =&\ \sin^2\alpha\biggl[(2\cos(\theta+\frac{\phi}{2})\cos(\frac{\phi}{2})+2\cos(2\beta)\sin(\theta+\frac{\phi}{2})\sin(\frac{\phi}{2})\biggr]^2+\cos^2\alpha\sin^2(2\beta)(1-\cos\phi)^2 +\\
        &+2\sin\alpha\cos\alpha\biggl[(2\cos(\theta+\frac{\phi}{2})\cos(\frac{\phi}{2})+2\cos(2\beta)\sin(\theta+\frac{\phi}{2})\sin(\frac{\phi}{2})\biggr]\sin(2\beta)(1-\cos\phi) +\\
        &+\sin^2\alpha\biggl[(2\sin(\theta+\frac{\phi}{2})\sin(\frac{\phi}{2})-2\cos(2\beta)\cos(\theta+\frac{\phi}{2})\sin(\frac{\phi}{2})\biggr]^2 + \cos^2\alpha\sin^2(2\beta)\sin^2\phi +\\
        &+2\sin\alpha\cos\alpha\biggl[(2\sin(\theta+\frac{\phi}{2})\sin(\frac{\phi}{2})-2\cos(2\beta)\cos(\theta+\frac{\phi}{2})\sin(\frac{\phi}{2})\biggr]\sin(2\beta)\sin\phi.
    \end{split}
\end{equation*}

\noindent Grouping related terms together, we get

\begin{equation*}
    \begin{split}
        |\dagger|^2 =&\ 4\sin^2\alpha\biggl[\biggr((\cos(\theta+\frac{\phi}{2})\cos(\frac{\phi}{2})+\cos(2\beta)\sin(\theta+\frac{\phi}{2})\sin(\frac{\phi}{2})\biggr)^2+\\
        &+\biggl((\sin(\theta+\frac{\phi}{2})\sin(\frac{\phi}{2})-\cos(2\beta)\cos(\theta+\frac{\phi}{2})\sin(\frac{\phi}{2})\biggr)^2\biggr]+\\
        &+\cos^2\alpha\biggl[\sin^2(2\beta)(2\sin^2\frac{\phi}{2})^2+\sin^2(2\beta)(2\sin(\frac{\phi}{2})\cos(\frac{\phi}{2}))^2\biggr]+\\
        &+4\sin\alpha\cos\alpha\biggl\{\biggl[(\cos(\theta+\frac{\phi}{2})\cos(\frac{\phi}{2})+\cos(2\beta)\sin(\theta+\frac{\phi}{2})\sin(\frac{\phi}{2})\biggr]\sin(2\beta)(1-\cos\phi)+\\
        &+\biggl[(\sin(\theta+\frac{\phi}{2})\sin(\frac{\phi}{2})-\cos(2\beta)\cos(\theta+\frac{\phi}{2})\sin(\frac{\phi}{2})\biggr]\sin(2\beta)\sin\phi\biggr\}
    \end{split}
\end{equation*}

\noindent If we use some trigonometric identities to express this in terms of $\frac{\phi}{2}$,

\begin{equation*}
    \begin{split}
        |\dagger|^2=&\ 4\sin^2\alpha\biggl[\biggl((\cos(\theta+\frac{\phi}{2})\cos(\frac{\phi}{2})+\cos(2\beta)\sin(\theta+\frac{\phi}{2})\sin(\frac{\phi}{2})\biggr)^2+\\
        &+\biggl((\sin(\theta+\frac{\phi}{2})\sin(\frac{\phi}{2})-\cos(2\beta)\cos(\theta+\frac{\phi}{2})\sin(\frac{\phi}{2})\biggr)^2\biggr]+\\
        &+4\cos^2\alpha\biggl[\sin^2(2\beta)\sin^4\frac{\phi}{2}+\sin^2(2\beta)\sin^2\frac{\phi}{2}\cos^2\frac{\phi}{2}\biggr]+\\
        &+4\sin\alpha\cos\alpha\sin(2\beta)\biggl\{\biggl[(\cos(\theta+\frac{\phi}{2})\cos(\frac{\phi}{2})+\cos(2\beta)\sin(\theta+\frac{\phi}{2})\sin(\frac{\phi}{2})\biggr](2\sin^2\frac{\phi}{2})+\\
        &+\biggl[(\sin(\theta+\frac{\phi}{2})\sin(\frac{\phi}{2})-\cos(2\beta)\cos(\theta+\frac{\phi}{2})\sin(\frac{\phi}{2})\biggr]2\sin(\frac{\phi}{2})\cos(\frac{\phi}{2})\biggr\}.
    \end{split}
\end{equation*}

Now we return to calculating the probability of our target $\ket{\tau}$.

\[P(\ket{\tau}) = \frac{1}{4}|\dagger|^2 = (\sin^2\alpha)\textbf{\textit{a}} + (\cos^2\alpha)\textbf{\textit{b}} + 2\sin\alpha\cos\alpha\Big(\sin(2\beta)\sin(\frac{\phi}{2})\Big)\textbf{\textit{c}}\]

\noindent , where the \textbf{\textit{a}}, \textbf{\textit{b}}, and \textbf{\textit{c}} are defined as follows.
\[\begin{split}
       \textbf{\textit{a}} =&\ \biggl\{\cos(\theta+\frac{\phi}{2})\cos(\frac{\phi}{2})+\cos(2\beta)\sin(\theta+\frac{\phi}{2})\sin(\frac{\phi}{2})\biggr\}^2+ \\
       &+\biggl\{\sin(\theta+\frac{\phi}{2})\cos(\frac{\phi}{2})-\cos(2\beta)\cos(\theta+\frac{\phi}{2})\sin(\frac{\phi}{2})\biggr\}^2
\end{split}
\]

\[\textbf{\textit{b}}=\sin^2(2\beta)\sin^4(\frac{\phi}{2})+\sin^2(2\beta)\sin^2(\frac{\phi}{2})\cos^2(\frac{\phi}{2})\]

\[\begin{split}
    \textbf{\textit{c}}  =&\ \sin(\frac{\phi}{2})\biggl(\cos(\theta+\frac{\phi}{2})\cos(\frac{\phi}{2})+\cos(2\beta)\sin(\theta+\frac{\phi}{2})\sin(\frac{\phi}{2})\biggr)+ \\
    & +\cos(\frac{\phi}{2})\biggl(\sin(\theta+\frac{\phi}{2})\cos(\frac{\phi}{2})-\cos(2\beta)\cos(\theta+\frac{\phi}{2})\sin(\frac{\phi}{2})\biggr)
\end{split}\]

\noindent Below we simplify these three terms (\textbf{\textit{a}}, \textbf{\textit{b}}, and \textbf{\textit{c}}) a little more.

\begin{equation*}
    \begin{split}
        \textbf{\textit{a}} =&\ \cos^2(\theta+\frac{\phi}{2})\cos^2\frac{\phi}{2} +\cos^2(2\beta)\sin^2(\theta+\frac{\phi}{2})\sin^2\frac{\phi}{2}+\\
        &+\sin^2(\theta+\frac{\phi}{2})\cos^2\frac{\phi}{2}+\cos^2(2\beta)\cos^2(\theta+\frac{\phi}{2})\sin^2\frac{\phi}{2}+\\
        &+ 2\cos(\theta+\frac{\phi}{2})\cos(\frac{\phi}{2})\cos(2\beta)\sin(\theta+\frac{\phi}{2})\sin(\frac{\phi}{2})-\\
        &- 2\sin(\theta+\frac{\phi}{2})\cos(\frac{\phi}{2})\cos(2\beta)\cos(\theta+\frac{\phi}{2})\sin(\frac{\phi}{2})
    \end{split}
\end{equation*}
\[=\cos^2\frac{\phi}{2}+\cos^2(2\beta)\sin^2\frac{\phi}{2}\]
\[=\cos^2\frac{\phi}{2}+(1-\sin^2(2\beta))\sin^2\frac{\phi}{2}\]
\[=1-\sin^2(2\beta)\sin^2(\frac{\phi}{2})\]
\[\textbf{\textit{b}} = \sin^2(2\beta)\sin^2(\frac{\phi}{2})\]
\[\textbf{\textit{c}} = \cos(\theta+\frac{\phi}{2})\sin(\frac{\phi}{2})\cos(\frac{\phi}{2})(1-\cos(2\beta))+\sin(\theta+\frac{\phi}{2})(\cos(2\beta)\sin^2(\frac{\phi}{2})+\cos^2(\frac{\phi}{2}))\]
\[=\cos(\theta+\frac{\phi}{2})\sin(\frac{\phi}{2})\cos(\frac{\phi}{2})(1-\cos(2\beta))+\sin(\theta+\frac{\phi}{2})(\sin^2(\frac{\phi}{2})\cos(2\beta)+(1-\sin^2(\frac{\phi}{2}))\]
\[=\cos(\theta+\frac{\phi}{2})\sin(\frac{\phi}{2})\cos(\frac{\phi}{2})(1-\cos(2\beta))+\sin(\theta+\frac{\phi}{2})(-\sin^2(\frac{\phi}{2})(1-\cos(2\beta))+1)\]
\[=(1-\cos(2\beta))\sin(\frac{\phi}{2})\biggl[\cos(\theta+\frac{\phi}{2})\cos(\frac{\phi}{2})-\sin(\theta+\frac{\phi}{2})\sin(\frac{\phi}{2})\biggr]+\sin(\theta+\frac{\phi}{2})\]
\[=\cos(\theta+\phi)\sin(\frac{\phi}{2})(1-\cos(2\beta))+\sin(\theta+\frac{\phi}{2})
\]
Putting these all together, our target probability becomes

\[\begin{split}
    P(\ket{\tau})  =&\ \sin^2\alpha\biggl[1-\sin^2(2\beta)\sin^2(\frac{\phi}{2})\biggr]+\cos^2\alpha\sin^2(2\beta)\sin^2(\frac{\phi}{2})+ \\
    & +2\sin\alpha\cos\alpha\sin(2\beta)\sin(\frac{\phi}{2})\biggl[\cos(\theta+\phi)\sin(\frac{\phi}{2})(1-\cos(2\beta))+\sin(\theta+\frac{\phi}{2})\biggr] \\
    \end{split}\]

\[\begin{split}
    =&\ \sin^2\alpha + \sin^2(2\beta)\cos(2\alpha)\sin^2(\frac{\phi}{2})+\sin(2\alpha)\sin(2\beta)\biggl[2\cos(\phi+\theta)\sin^2(\frac{\phi}{2})\sin^2\beta +\\
    &+ \sin(\frac{\phi}{2}+\theta)\sin(\frac{\phi}{2})\biggr]
\end{split}\]

\noindent which reduces to

\begin{equation}\begin{split}
P(\ket{\tau})= &\ \sin^2\alpha+\sin^2(2\beta)\cos(2\alpha)\biggl(\frac{1-\cos\phi}{2}\biggr)+ \\
&+\sin(2\alpha)\sin(2\beta)\biggl[\cos(\phi+\theta)(1-\cos\phi)\sin^2\beta-\frac{1}{2}\cos(\phi+\theta)+\frac{1}{2}\cos\theta\biggr]\label{eq:ProbTargetUpdate}
\end{split}\end{equation}

We then use \eqref{eq:BetaDesc} to incorporate $\sin\beta = \frac{1}{\sqrt{N}}$ and $\sin(2\beta) = \frac{2\sqrt{N-1}}{N}$. Substituting these values into \eqref{eq:ProbTargetUpdate},

\begin{equation}
    \begin{split}
        P(\ket{\tau}) = &\ \frac{2\sqrt{N-1}}{N}\biggl[\biggl(\frac{\sqrt{N-1}}{N}\cos(2\alpha)+\sin(2\alpha)\frac{1}{N}\cos(\phi+\theta)\biggr)(1-\cos\phi)-\\
    &-\frac{1}{2}\sin(2\alpha)\cos(\phi+\theta)\biggr]+\frac{\sqrt{N-1}}{N}\sin(2\alpha)\cos(\theta)+\sin^2\alpha
    \end{split}\label{eq:OneStepProb}
\end{equation}

This completes the proof.
\end{proof}

Now we can address our goal to find the optimal phase to maximize $P(\ket{\tau})$, the probability of the target vector $\ket{\tau}$ in the following Corollary.

\begin{corollary}
\label{cor:argmax}
    For the one-step iteration of the generalized Grover's algorithm in equation \eqref{eq:IterationMatrix2} with the general initial vector $\mathbf{v}_m^\top = [\sin(\alpha)e^{i\theta},\ \cos(\alpha)]^\top$, the optimal phase change for the target probability $P(\ket{\tau})$ is given as follows
    \begin{equation}\begin{split}
    argmax_{\phi\in[-\pi,\pi]}  \biggl(&\frac{\sqrt{N-1}}{N}\cos(2\alpha)+\frac{1}{N}\sin(2\alpha)\cos(\phi+\theta)\biggr)(1-\cos(\phi))\\ & -\frac{1}{2}\sin(2\alpha)\cos(\phi+\theta).\label{eq:Optimizaiton}
\end{split}
\end{equation}
    This optimal phase change is identified in Figure \ref{fig:1}.
\end{corollary}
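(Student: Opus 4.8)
The plan is to read the claim directly off the probability formula already established in \Cref{thm:GeneralProbability}. Write \eqref{eq:OneStepProb} in the form
\[
P(\ket{\tau}) \;=\; \frac{2\sqrt{N-1}}{N}\,Q(\phi) \;+\; R,
\]
where $Q(\phi)$ denotes precisely the expression that appears inside the $argmax$ in \eqref{eq:Optimizaiton}, namely
\[
Q(\phi) = \Bigl(\tfrac{\sqrt{N-1}}{N}\cos(2\alpha)+\tfrac{1}{N}\sin(2\alpha)\cos(\phi+\theta)\Bigr)(1-\cos\phi)-\tfrac{1}{2}\sin(2\alpha)\cos(\phi+\theta),
\]
and $R = \tfrac{\sqrt{N-1}}{N}\sin(2\alpha)\cos\theta + \sin^{2}\alpha$. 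The two points that make the corollary immediate are: (i) $R$ contains no $\phi$; and (ii) the prefactor $\tfrac{2\sqrt{N-1}}{N}$ is strictly positive for every $N=2^{n}\geq 2$. Hence $\phi\mapsto P(\ket{\tau})$ and $\phi\mapsto Q(\phi)$ are increasing affine reparametrizations of one another, so they are maximized over exactly the same set of phases; that is, $argmax_{\phi}\,P(\ket{\tau}) = argmax_{\phi}\,Q(\phi)$, which is the assertion.

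Two routine remarks finish the argument. First, every occurrence of $\phi$ in $P(\ket{\tau})$ is through $\cos\phi$, $\cos(\phi+\theta)$, and products thereof, each $2\pi$-periodic in $\phi$; therefore optimizing over the interval $[-\pi,\pi]$, as stated in \eqref{eq:Optimizaiton}, is equivalent to optimizing over all of $\mathbb{R}$. Second, $Q$ is continuous and $[-\pi,\pi]$ is compact, so the supremum is attained and the $argmax$ is a nonempty subset of $[-\pi,\pi]$. The actual location of the maximizer as a function of $(\alpha,\theta,N)$ has no convenient closed form in general, so it is computed numerically and recorded in \cref{fig:1}; a stationary-point analysis (differentiating $Q$ in $\phi$ and examining the resulting trigonometric equation, together with checking the behavior at $\phi=\pm\pi$) is what underlies that computation.

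I do not expect a genuine obstacle here: all of the substantive work — the expansion of the modulus squared and the trigonometric consolidation into \eqref{eq:OneStepProb} — is already carried out in \Cref{thm:GeneralProbability}, and the corollary is simply the observation that $P(\ket{\tau})$ depends on $\phi$ only through a fixed positive multiple of $Q(\phi)$ plus a $\phi$-independent constant. The only place where care is warranted is bookkeeping the sign of the prefactor and the $\phi$-independence of $R$, both of which are transparent from the displayed formula.
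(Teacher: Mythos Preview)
Your proof is correct and follows the same approach as the paper: both simply observe that \eqref{eq:Optimizaiton} can be read off directly from \eqref{eq:OneStepProb}, and both appeal to numerical optimization for the content of \cref{fig:1}. You are somewhat more explicit than the paper in isolating the $\phi$-independent remainder $R$, noting the positivity of the prefactor $2\sqrt{N-1}/N$, and verifying existence of the maximizer by compactness, but these are elaborations of the same one-line idea rather than a different route.
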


\begin{proof}
    The statement of \eqref{eq:Optimizaiton} can be seen directly by examination of equation \eqref{eq:OneStepProb}. Recall that $\alpha$ and $\theta$ are from the amplitude vector $\mathbf{v}_m$ before update.

    To visualize this computationally, we utilize the optimization and global optimization toolboxes of MATLAB. Because $\alpha$ and $\theta$ are values that come from the description of the original amplitude vector, we create a grid of 100 values of $\alpha$ and of $\theta$ ranging over $[0,\pi/2]$ and $[-\pi,\pi]$ respectively. 
    
We use a Golden-section search (fminbnd) due to its speed when doing many examples. To account for the fact that this might give us local maximums, we utilized the global optimization toolbox to run interior-point methods for randomly selected starting points to find our global maximum, or at least provide a better optimization than the golden-section search. What we found is that the differences between these two searches were negligible as seen in Figure \ref{fig:1}.

\begin{figure}[h!]
    \centering
    \includegraphics[width=1\linewidth]{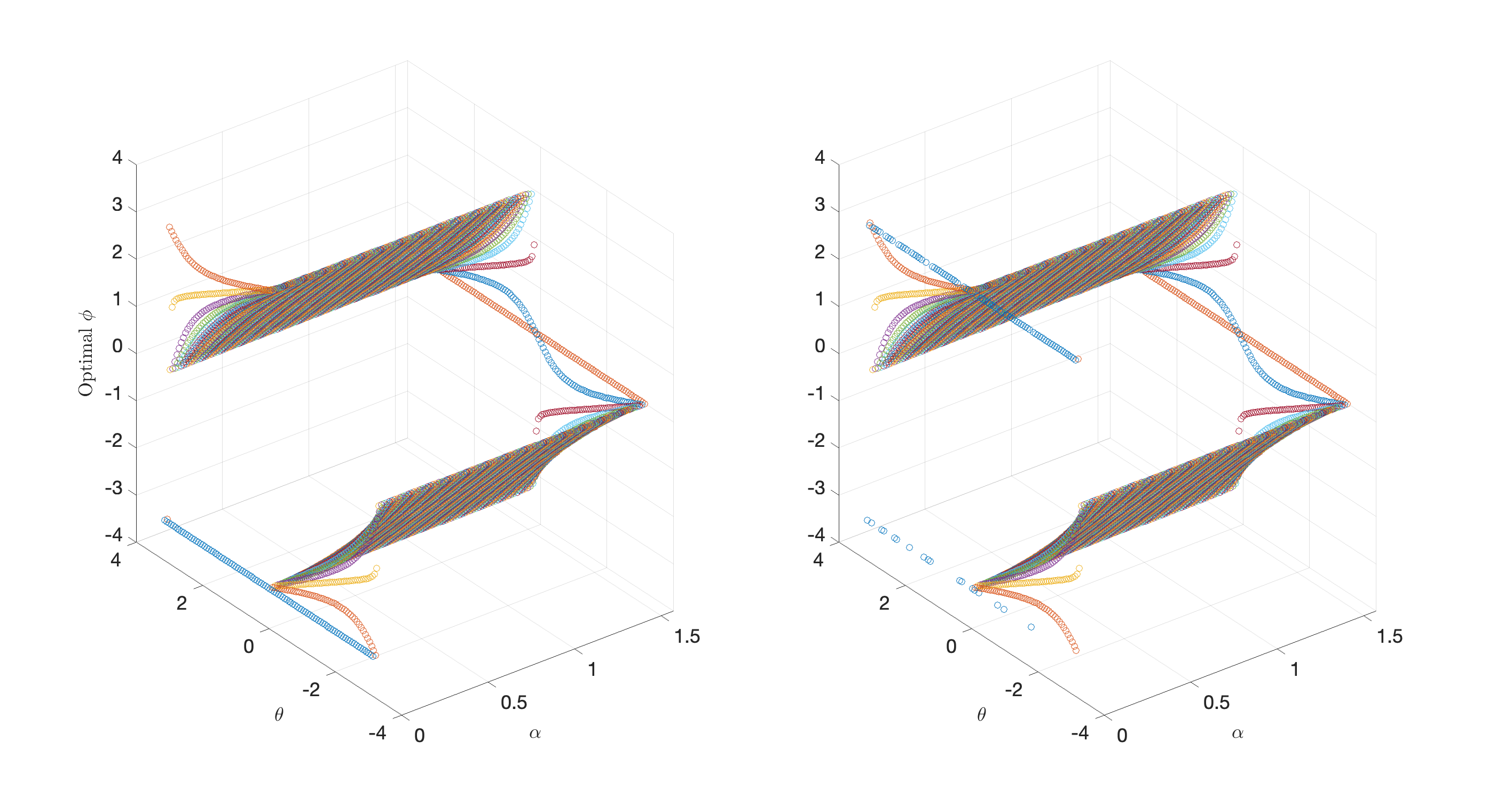}
    \caption{Comparison of optimal phase change found by local search (left) and global search (right) for $N = 2^{10}$}
    \label{fig:1}
\end{figure}
\end{proof}

\begin{remark}
    As can be seen in Figure \ref{fig:1}, the optimal $\phi$ is in fact related to the value of $\theta$ found in \eqref{eq:GenAmp}, the amplitude vector. This relationship roughly is

\begin{equation}
    \begin{cases}
    \phi \approx -\pi-\theta,\quad \theta < 0 \\
    \phi \approx \pi-\theta,\quad \ \ \theta > 0
\end{cases}\label{eq:RoughRelationship}
\end{equation}
\end{remark}

This generalization does not hold when $\alpha$ is an extreme value, either being too close to 0 (corresponding to an amplitude close to 0 for our target), or too close to $\pi/2$ (corresponding to an amplitude close to 1 for our target). In these cases, the optimal $\phi$ cannot be estimated in the same way, and does require use of an optimization algorithm to find.

It is important to note that the relationship described in \eqref{eq:RoughRelationship} is a rough estimate. While it is a good approximate, the global search does differ, and the difference is shown in Figure \ref{fig:rough_change} where we compare the optimal phase identified in Corollary \ref{cor:argmax} with

\begin{equation}
    \phi = \begin{cases}
    -\pi-\theta,\quad \theta <0\\
    \pi-\theta,\quad\ \ \theta >0
\end{cases}\label{eq:SimplePhi}
\end{equation}

\begin{figure}[h!]
    \centering
    \includegraphics[width=1\linewidth]{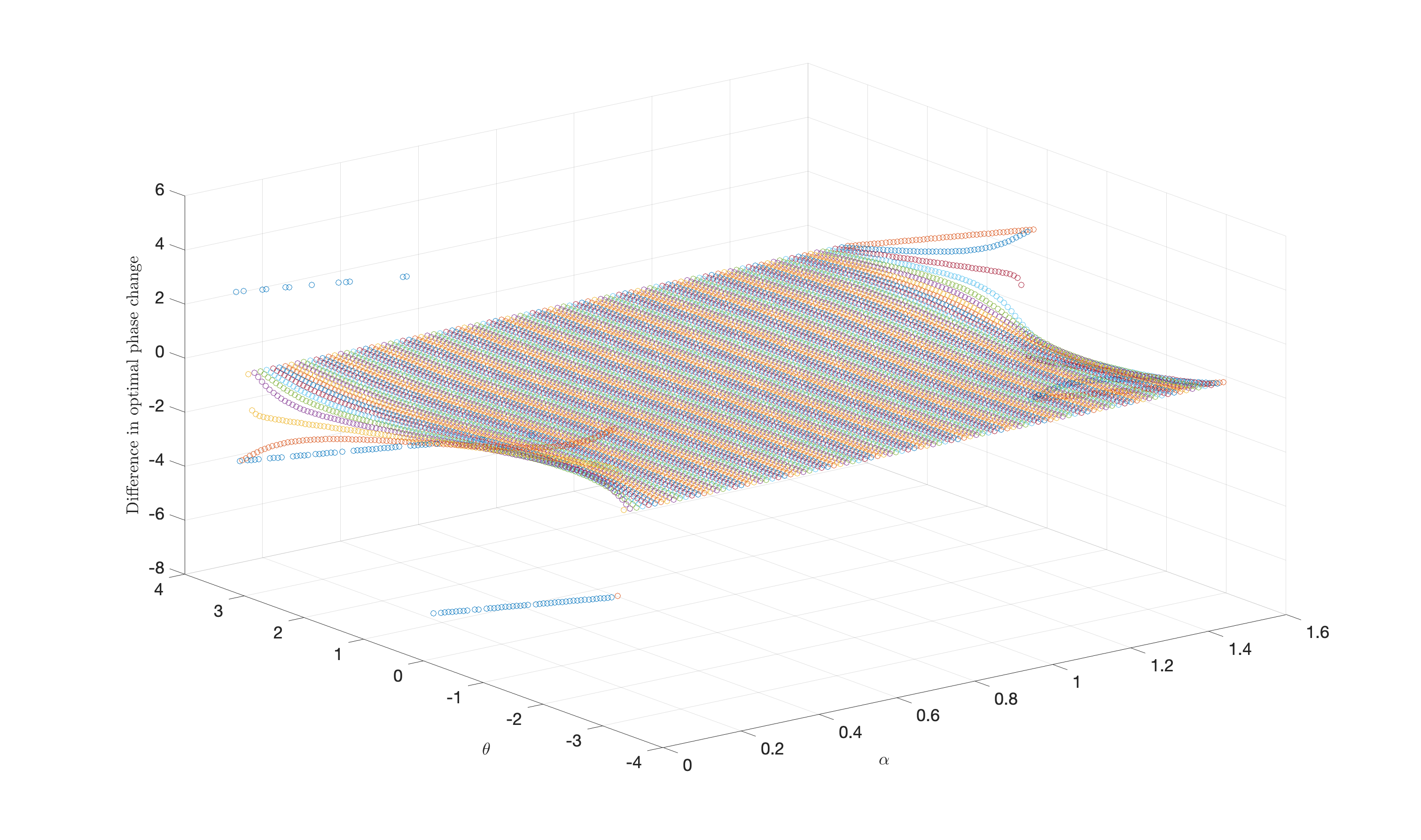}
    \caption{Difference between phase change determined by \eqref{eq:SimplePhi} and optimal phase change found by global search for $N=2^{10}$}
    \label{fig:rough_change}
\end{figure}

\begin{remark}
    By using the optimal $\phi$ as the phase change rather than $\pi$, we find a range of improvement in the probability increase provided by one iteration. This improvement depends on the amplitude vector's $\alpha$ and $\theta$ values. This comparison results in Figure \ref{fig:2}.
\end{remark}

\begin{figure}[h!]
    \centering
    \includegraphics[width=1\linewidth]{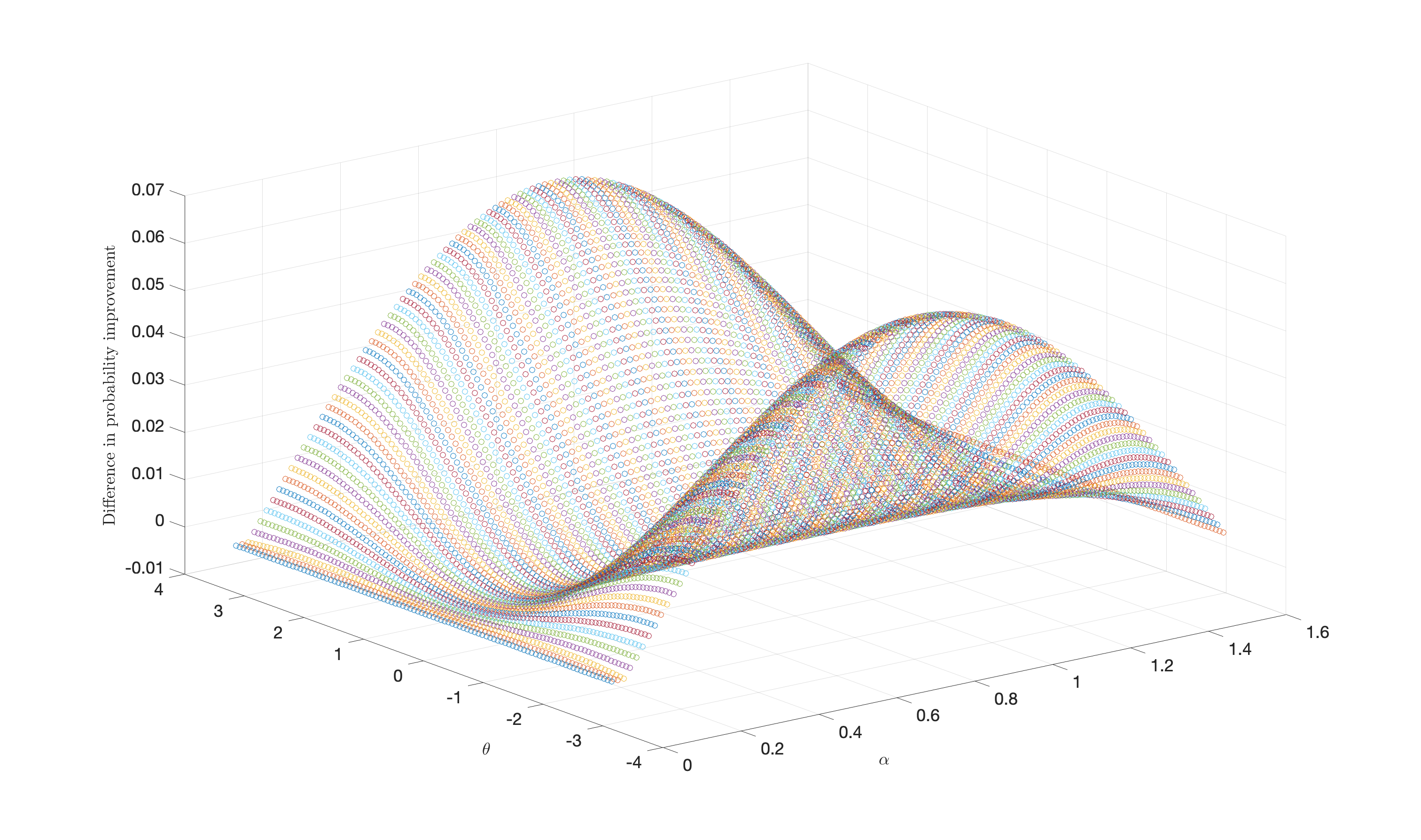}
    \caption{Difference in probability improvement from one iteration between the optimal phase change, and $\pi$ for $N=2^{10}$}
    \label{fig:2}
\end{figure}

As can be seen in Figure \ref{fig:2}, when $\theta$ is close to 0, there is little improvement over the probability attained by Grover's algorithm. However, as $\theta$ moves away from 0, the improvement provided by the optimized phase change increases, with the bigger improvements as $\theta$ gets closer to $\pi$. This agrees with the relationship seen in Figure \ref{fig:1} and also observed in \eqref{eq:RoughRelationship}, since they suggest an optimal phase change would be close to $\phi = 0$ when $\theta$ gets close to $\pi$, while Grover's algorithm would be using a phase change of $\phi = \pi$.

In the next section, we try to use this framework to understand the optimality of classical Grover's algorithm in much more precise terms, and determine when the optimal phase change strays away from $\pi$, the phase from classical Grover's algorithm.
\section{Optimizing when initializing with the Hadamard gate}\label{sec5}

Adopting the Hadamard gate at the beginning is part of the standard procedure when using Grover's algorithm since it initializes all states with the same amplitude, mimicking the lack of underlying structure in the database (or potentially, our lack of understanding information about it). This leaves us with an initial amplitude vector based on $\ket{\tau}$ and $\ket{a}$

\begin{equation}
    \mathbf{v}_0 = \begin{bmatrix}
    \frac{1}{\sqrt{N}}\\
    \\
    \sqrt{\frac{N-1}{N}}
\end{bmatrix}=\mathbf{v}_0^H\label{eq:InitialHadamard}
\end{equation}

As can be seen in \eqref{eq:InitialHadamard}, the vector is completely real, i.e. $\mathbf{v}_0\in\mathbb{R}^2$, hence our $\theta = 0$, when compared to the general amplitude vector \eqref{eq:GenAmp}. Recall that $|v_1|^2 + |v_2|^2 = 1$ for $\mathbf{v} = [v_1\ v_2]^\top$. Let us first consider a general but real amplitude vector. First of all, note the Hadamard gate, which is how most of the unstructured search algorithms begin in general, is real, and we want to first explore how and when the vector derails from real, while updated with the optimal phase change.

\begin{theorem}
\label{th:RealVec}
    Consider a general but real initial vector $\textbf{v}^\top=[\sin(\alpha),\ \cos(\alpha)]^\top$ to start the generalized Grover's iteration given in \eqref{eq:IterationMatrix2}. Then the optimal phase change for the target probability $P(\ket{\tau})$ behaves differently depending on the region $\alpha$ belongs to of three different regions $\{\textbf{R}_i\}_{1\leq i\leq 3}$. More precisely,

    \[argmax P(\ket{\tau})_{\phi\in [-\pi,\pi]} = \begin{cases}
        \pi,\ \alpha \in \textbf{R}_1\\
        \cos^{-1}(\frac{1}{4}(-N+2-2\sqrt{N-1}\cot(2\alpha)),\ \alpha\in \textbf{R}_2\\
        0,\ \alpha\in\textbf{R}_3
    \end{cases}\],

    where

    \[\begin{cases}
        \textbf{R}_1 = \Bigl(0,\frac{1}{2}\cot^{-1}\Big(\frac{-N+6}{2\sqrt{N-1}}\Big)\Bigr]\\
    \textbf{R}_2 = \Big(\frac{1}{2}\cot^{-1}\Big(\frac{-N+6}{2\sqrt{N-1}}\Big),\frac{1}{2}\cot^{-1}\Big(\frac{-N-2}{2\sqrt{N-1}}\Big)\Big)\\
    \textbf{R}_3 = \Bigl[\frac{1}{2}\cot^{-1}\Big(\frac{-N-2}{2\sqrt{N-1}}\Big), \frac{\pi}{2}\Bigr)
    \end{cases}\]
\end{theorem}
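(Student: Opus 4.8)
The plan is to reduce the optimization to a single-variable quadratic problem by exploiting that, for a real initial vector, $\theta = 0$. Substituting $\theta = 0$ into the objective \eqref{eq:Optimizaiton} of \Cref{cor:argmax}, the quantity to be maximized depends on $\phi$ only through $\cos\phi$. I would therefore set $u = \cos\phi$; as $\phi$ ranges over $[-\pi,\pi]$, $u$ ranges over $[-1,1]$, so maximizing over $\phi$ is the same as maximizing over $u\in[-1,1]$, and a maximizing $\phi$ is recovered as $\phi = \pm\cos^{-1}(u)$. (The optimizer is thus only unique up to sign; the theorem reports the principal value $\cos^{-1}(u)\in[0,\pi]$, and at $u = \pm 1$ this returns $0$ or $\pi$.)

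Expanding in $u$, the objective is a quadratic $q(u) = -\tfrac{1}{N}\sin(2\alpha)\,u^2 + bu + c$, whose leading coefficient is strictly negative whenever $\alpha\in(0,\tfrac{\pi}{2})$, since then $\sin(2\alpha) > 0$. Hence $q$ is strictly concave, and its unconstrained maximizer is the vertex
\[
u^\star \;=\; -\frac{b}{2a}\;=\;\frac{1}{4}\bigl(-N + 2 - 2\sqrt{N-1}\,\cot(2\alpha)\bigr),
\]
which I would verify by collecting the coefficient of $u$ from the three products in \eqref{eq:Optimizaiton} and simplifying. The constrained maximizer on $[-1,1]$ is then simply $u^\star$ truncated to that interval: $u=-1$ (i.e. $\phi=\pi$) if $u^\star\le -1$; $u = u^\star$ (i.e. $\phi = \cos^{-1}(u^\star)$) if $-1<u^\star<1$; and $u=1$ (i.e. $\phi=0$) if $u^\star\ge 1$. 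This already yields the three-case structure of the statement.

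It remains to translate the threshold conditions $u^\star\le -1$ and $u^\star\ge 1$ into conditions on $\alpha$. Both are affine in $\cot(2\alpha)$; clearing the factor $\tfrac14$ and dividing through by $-2\sqrt{N-1} < 0$ (which reverses each inequality) gives $\cot(2\alpha)\ge \frac{-N+6}{2\sqrt{N-1}}$ for the first and $\cot(2\alpha)\le \frac{-N-2}{2\sqrt{N-1}}$ for the second. Since $2\alpha\in(0,\pi)$ and $\cot$ restricts there to a strictly decreasing bijection onto $\mathbb{R}$, these are equivalent to $\alpha\le\tfrac12\cot^{-1}\!\bigl(\tfrac{-N+6}{2\sqrt{N-1}}\bigr)$ and $\alpha\ge\tfrac12\cot^{-1}\!\bigl(\tfrac{-N-2}{2\sqrt{N-1}}\bigr)$, with $\cot^{-1}$ taken with range $(0,\pi)$; this is exactly $\textbf{R}_1$ and $\textbf{R}_3$, and $\textbf{R}_2$ is the complementary open interval. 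Because $\frac{-N+6}{2\sqrt{N-1}} > \frac{-N-2}{2\sqrt{N-1}}$ and $\cot^{-1}$ is decreasing, the right endpoint of $\textbf{R}_1$ lies to the left of the left endpoint of $\textbf{R}_3$, so the three regions are nonempty, disjoint, and tile $(0,\tfrac{\pi}{2})$; the allocation of shared endpoints is immaterial, since at $u^\star = -1$ the $\textbf{R}_2$ formula gives $\cos^{-1}(-1)=\pi$ and at $u^\star = 1$ it gives $\cos^{-1}(1)=0$, matching the neighboring cases.

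The main difficulty here is bookkeeping rather than conceptual: one must track the sign reversals carefully when dividing the vertex inequalities by $-2\sqrt{N-1}$, and must fix the branch of $\cot^{-1}$ as $(0,\pi)$ so that the $\alpha$-regions partition $(0,\tfrac{\pi}{2})$ cleanly. A minor additional point deserving one sentence is the behavior at the excluded endpoints $\alpha\in\{0,\tfrac{\pi}{2}\}$, where $\sin(2\alpha)=0$ degenerates $q$ to a constant (resp. affine) function of $u$; these correspond to the trivial situations of target amplitude $0$ or $1$ and are why $\textbf{R}_1$ and $\textbf{R}_3$ are taken open at those ends.
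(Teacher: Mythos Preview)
Your proposal is correct and follows essentially the same argument as the paper: reduce to a concave quadratic in $u=\cos\phi$, locate its vertex $u^\star=\tfrac14(-N+2-2\sqrt{N-1}\cot(2\alpha))$, and truncate to $[-1,1]$, then translate the thresholds $u^\star=\pm1$ into the $\alpha$-regions via the monotonicity of $\cot$ on $(0,\pi)$. The only difference is the entry point: you specialize \Cref{cor:argmax} to $\theta=0$, whereas the paper re-derives $P(\ket{\tau})$ from the iteration matrix \eqref{eq:IterationMatrix2}, writing the updated target amplitude as $\xi$ with $\operatorname{Re}(\xi)=A\cos\phi+B$, $\operatorname{Im}(\xi)=C\sin\phi$ and expanding $|\xi|^2$; both routes land on the same quadratic. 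Your shortcut is more economical, while the paper's direct computation also records as a byproduct that the classical iterate $\phi=\pi$ keeps a real amplitude vector real. Your remarks on fixing the $\cot^{-1}$ branch and on boundary consistency at $u^\star=\pm1$ are worthwhile clarifications that the paper leaves implicit.
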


\begin{proof}
Consider a general but real initial vector

    \[\mathbf{v} = \begin{bmatrix}
    \sin\alpha\\
    \cos\alpha
\end{bmatrix}.\]

We then apply Grover's iterate matrix \eqref{eq:IterationMatrix2} to this general but real amplitude vector. Then the updated amplitude vector is

\begin{equation}
    \begin{bmatrix}
    e^{i\phi}\Big(\frac{1-e^{i\phi}}{N}-1\Big)\sin\alpha + \frac{\sqrt{N-1}}{N}(1-e^{i\phi})\cos\alpha\\
    \\
    \frac{\sqrt{N-1}}{N}e^{i\phi}(1-e^{i\phi})\sin\alpha - \frac{1}{N}(1+(N-1)e^{i\phi})\cos\alpha
\end{bmatrix}\label{eq:OneStepAmp}
\end{equation}

Now we look at the real and imaginary parts of each coordinate in \eqref{eq:OneStepAmp}. Note that the imaginary parts for both are multiplied by either $\sin\phi$ or $\sin(2\phi)$. Therefore, if $\phi = \pi$, as in the classical Grover's algorithm, the imaginary part becomes 0. This means that if the amplitude vector is real, and a Grover iterate matrix with a phase change of $\pi$ (i.e. classical Grover's iterate) is applied to it, the resulting amplitude vector will remain real. Hence if the system is initialized with a Hadamard gate, giving us the real initial amplitude vector \eqref{eq:InitialHadamard}, the amplitude vector will remain real throughout all of the iterations.

With this in mind, we return to our amplitude vector after one iteration \eqref{eq:OneStepAmp}. We can calculate the resulting probability of observing our target

\[P(\ket{\tau}) = \biggl|e^{i\phi}\biggl(\frac{1-e^{i\phi}}{N}-1\biggr)\sin\alpha + \frac{\sqrt{N-1}}{N}(1-e^{i\phi})\cos\alpha\biggr|^2\]

\noindent Substituting in our notation with $\beta$ in \eqref{eq:BetaDesc} and factoring out $-e^{i\phi}$ from the amplitude, we get

\begin{equation}
    P(\ket{\tau}) = \Big|((e^{i\phi}-1)\sin^2\beta+1)\sin\alpha+\sin\beta\cos\beta(1-e^{-i\phi})\cos\alpha\Big|^2.\label{eq:ProbHGate}
\end{equation}

\noindent To calculate the probability, we break down the transformed amplitude as follows.

\[\xi = ((e^{i\phi}-1)\sin^2\beta+1)\sin\alpha+\sin\beta\cos\beta(1-e^{-i\phi})\cos\alpha\]
\[Re(\xi) = A\cos\phi + B\]
\[Im(\xi) = C\sin\phi\]

\noindent, where

\[A = \sin^2\beta\sin\alpha-\sin\beta\cos\beta\cos\alpha\]
\[B = \sin\alpha\cos^2\beta+\cos\alpha\sin\beta\cos\beta\]
\[C = \sin^2\beta\sin\alpha + \sin\beta\cos\beta\cos\alpha\]

\noindent We can then rewrite the probability \eqref{eq:ProbHGate} as

\[P(\ket{\tau}) = (A^2-C^2)\cos^2\phi + 2AB\cos\phi + B^2 + C^2\]

\noindent To simplify our notation, let us define $a$ and $b$ as follows.

\begin{equation*}
    \begin{split}
    a = C^2-A^2 =&\   \sin^4\beta\sin^2\alpha + \sin^2\beta\cos^2\beta\cos^2\alpha + 2\sin^3\beta\sin\alpha\cos\alpha\cos\beta-\\
    & -\sin^4\beta\sin^2\alpha-\sin^2\beta\cos^2\beta\cos^2\alpha+2\sin^3\beta\sin\alpha\cos\alpha\cos\beta
\end{split}
\end{equation*}
\[=4\sin^2\beta(\sin\beta\cos\beta)(\sin\alpha\cos\alpha)\]
\[=\sin^2\beta\sin(2\beta)\sin(2\alpha)\]
\[b = -2AB = -2\sin\beta\cos\beta(\sin\beta\sin\alpha-\cos\beta\cos\alpha)(\sin\alpha\cos\beta+\cos\alpha\sin\beta)\]
\[=\sin(2\beta)\cos(\alpha+\beta)\sin(\alpha+\beta)\]
\[=\frac{1}{2}\sin(2\beta)\sin(2\alpha+2\beta)\]

\noindent Note that $a\geq 0$ since $\alpha\in [0,\frac{\pi}{2}].$ Letting $u = \cos\phi$, $P(\ket{\tau})$ becomes a quadratic polynomial in $u$ which is concave down.

\[P(\ket{\tau}) = -a\ u^2-b\ u+B^2+C^2\]
\[= -a\biggl(u+\frac{b}{2a}\biggr)^2+\textit{constant}\]

Note that if $|\frac{b}{2a}|\geq 1$, then the optimality for maximization is acquired at the boundary, that is when $u=-1$ or $u=1$, meaning $\phi$ is $\pi$ or 0. Hence, we now determine when we achieve the condition $|\frac{b}{2a}|<1$ since this is when the optimal $\phi$ maximizing $P(\ket{\tau})$ differs from $\pi$ or $0$. Recall that $\sin\beta = \frac{1}{\sqrt{N}}$, $\sin(2\beta) = \frac{2\sqrt{N-1}}{N}$, and see that $\cos(2\beta) = \frac{N-2}{N}$.

\[\frac{b}{2a} = \frac{\frac{1}{2}\sin(2\beta)\sin(2\alpha+2\beta)}{2\sin^2\beta\sin(2\beta)\sin(2\alpha)}\]

\noindent By utilizing the angle addition identity, this turns into

\[\frac{b}{2a}=\frac{N}{4}\biggl(\frac{\sin(2\alpha)\cos(2\beta)+\cos(2\alpha)\sin(2\beta)}{\sin(2\alpha)}\biggr)\]
\[=\frac{N}{4}(\cos(2\beta)+\cot(2\alpha)\sin(2\beta))\].

\noindent Again recalling how $\beta$ is defined, we can simplify $\frac{b}{2a}$ further as

\[=\frac{N}{4}\biggl(\frac{N-2}{N}+\frac{2\sqrt{N-1}}{N}\cot(2\alpha)\biggr)\]
\[=\frac{1}{4}(N-2+2\sqrt{N-1}\cot(2\alpha))\]

\noindent Returning to the inequality $|\frac{b}{2a}|<1$, the equivalent inequality becomes

\[-4<N-2+2\sqrt{N-1}\cot(2\alpha)<4\]

\noindent This is equivalent to the following condition for the angle $\alpha$

\begin{equation}
    \frac{1}{2}\cot^{-1}\biggl(\frac{-N+6}{2\sqrt{N-1}}\biggr)<\alpha<\frac{1}{2}\cot^{-1}\biggl(\frac{-N-2}{2\sqrt{N-1}}\biggr)\label{eq:NonTrivAl}
\end{equation}

\noindent This leads us to consider the following three regions $\{\textbf{R}_i\}_{i\leq 3}$ for $\alpha$,

\begin{equation}
    \begin{cases}
        \textbf{R}_1 = \Bigl(0,\frac{1}{2}\cot^{-1}\Big(\frac{-N+6}{2\sqrt{N-1}}\Big)\Bigr]\\
    \textbf{R}_2 = \Big(\frac{1}{2}\cot^{-1}\Big(\frac{-N+6}{2\sqrt{N-1}}\Big),\frac{1}{2}\cot^{-1}\Big(\frac{-N-2}{2\sqrt{N-1}}\Big)\Big)\\
    \textbf{R}_3 = \Bigl[\frac{1}{2}\cot^{-1}\Big(\frac{-N-2}{2\sqrt{N-1}}\Big), \frac{\pi}{2}\Bigr)
    \end{cases}\label{eq:Regions}
\end{equation}

Note that if $\alpha\in \textbf{R}_1$, then $\cot (2\alpha)\geq\frac{-N+6}{2\sqrt{N-1}}$, hence $\frac{1}{4}(N-2+2\sqrt{N-1}\cot(2\alpha))\geq1$. This means $\frac{-b}{2a}\leq-1$, hence $P(\ket{\tau})$ is maximized when $u=-1$, which corresponds to $\phi=\pi$. This implies that in the region $\textbf{R}_1$, classical Grover's algorithm is optimal. But once $\alpha\in\textbf{R}_2$, then $P(\ket{\tau})$ is maximized by a non-trivial phase change $\phi_{max}(\alpha)$, and we will identify this $\phi_{max}(\alpha)$ ($=argmax\ P(\ket{\tau})$) more precisely below (see \eqref{eq:PhiMax}). Finally, when $\alpha\in\textbf{R}_3$, then $\cot(2\alpha)\leq\frac{-N-2}{2\sqrt{N-1}}$, hence $\frac{1}{4}(N-2+2\sqrt{N-1}\cot(2\alpha))\leq-1$. This means $\frac{-b}{2a}\geq 1$, hence $P(\ket{\tau})$ is maximized when $u=1$, which corresponds to $\phi=0$. This means no rotation should occur once $\phi$ comes into this region.

Now let us identify the precise optimal phase change necessary in case $\alpha\in\textbf{R}_2$. Recall that if $\alpha\in\textbf{R}_2$, then $\phi_{max}(\alpha)$ is such that 

\[\cos(\phi_{max}(\alpha)) = \frac{-b}{2a} = \frac{1}{4}(-N+2-2\sqrt{N-1}\cot(2\alpha)).\] 

Since $\alpha\in[0,\frac{\pi}{2}]$, 

\begin{equation}
    \phi_{max}(\alpha) = \cos^{-1}(\frac{1}{4}(-N+2-2\sqrt{N-1}\cot(2\alpha))\label{eq:PhiMax}
\end{equation}
\end{proof}

\begin{corollary}
\label{cor:probtarget}

Suppose we begin the generalized Grover's iteration from the Hadamard vector $(\mathbf{v}_0^H)^\top = \bigg[\frac{1}{\sqrt{N}},\ \sqrt{\frac{N-1}{N}}\bigg]^\top$, and continue to use the optimal phase change at each step to maximize the target probability $P(\ket{\tau})$. Then the classical Grover's algorithm stays optimal until the target probability $P(\ket{\tau})$ reaches the threshold $P_r(N) = \frac{1}{2}\biggl(1+\frac{N-6}{\sqrt{N^2-8N+32}}\biggr)$.
\end{corollary}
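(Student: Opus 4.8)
The plan is to read the answer off \Cref{th:RealVec} and then convert the right endpoint of the region $\textbf{R}_1$ into a target probability. By \Cref{th:RealVec}, for a real amplitude vector $[\sin\alpha,\cos\alpha]^\top$ the phase $\phi=\pi$ maximizes $P(\ket\tau)$ exactly when $\alpha\in\textbf{R}_1=(0,\alpha^\ast]$, where $\alpha^\ast=\tfrac12\cot^{-1}\!\bigl(\tfrac{-N+6}{2\sqrt{N-1}}\bigr)$; and, as shown inside the proof of that theorem, applying $\Aphi$ with $\phi=\pi$ to a real vector again yields a real vector. Hence, starting from the real Hadamard vector $\voh=[\sin\beta,\cos\beta]^\top$ and always choosing the optimal phase, the iterate stays real, so at every step \Cref{th:RealVec} applies and the whole question reduces to tracking the angle of the current real iterate and asking when it first leaves $\textbf{R}_1$.

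First I would use that for $\phi=\pi$ the iterate \eqref{eq:IterationMatrix2} is a plane rotation by $2\beta$, carrying $[\sin\gamma,\cos\gamma]^\top$ to $[\sin(\gamma+2\beta),\cos(\gamma+2\beta)]^\top$. Thus the successive angles are $\alpha_k=(2k+1)\beta$, and $P(\ket\tau)=\sin^2\alpha_k$ is strictly increasing while $\alpha_k<\tfrac\pi2$. One checks $\alpha_0=\beta\in\textbf{R}_1$ (equivalently $\cot 2\beta=\tfrac{N-2}{2\sqrt{N-1}}\ge\tfrac{-N+6}{2\sqrt{N-1}}$, i.e. $N\ge4$) and $\alpha^\ast<\tfrac\pi2$. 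An induction on $k$ then gives: as long as $\alpha_k\le\alpha^\ast$ the optimal phase is $\pi$, the iterate remains real, and $\alpha_{k+1}=\alpha_k+2\beta$. Consequently classical Grover's algorithm is optimal precisely at the iterations with $P(\ket\tau)=\sin^2\alpha_k<\sin^2\alpha^\ast$; the mild bookkeeping point is that at the first step leaving $\textbf{R}_1$ one has $\alpha^\ast<\alpha_k\le\alpha^\ast+2\beta\le\pi-\alpha^\ast$ (the last inequality again amounting to $N-2\ge N-6$), so $\sin^2\alpha_k\ge\sin^2\alpha^\ast$ and the threshold is genuinely crossed at that step.

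It remains to identify $\sin^2\alpha^\ast$ with $P_r(N)$. Writing $\sin^2\alpha^\ast=\tfrac12(1-\cos 2\alpha^\ast)$, I would solve the system $\cot 2\alpha^\ast=\tfrac{-N+6}{2\sqrt{N-1}}$, $\sin 2\alpha^\ast\ge0$ (legitimate since $2\alpha^\ast\in(0,\pi)$), which yields $\cos 2\alpha^\ast=\tfrac{-N+6}{\sqrt{(N-6)^2+4(N-1)}}$. The one algebraic fact that makes everything close up is $(N-6)^2+4(N-1)=N^2-8N+32$, producing exactly the radical in $P_r(N)$; substituting gives $\sin^2\alpha^\ast=\tfrac12\bigl(1+\tfrac{N-6}{\sqrt{N^2-8N+32}}\bigr)=P_r(N)$, as claimed.

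There is no deep obstacle here: the proof is essentially a change of variable from the angle $\alpha$ to the probability $\sin^2\alpha$, fed by \Cref{th:RealVec}. The two spots that require care are (i) the sign and range conventions for $\cot^{-1}$ and $\cos 2\alpha^\ast$ — one must keep $\cos 2\alpha^\ast$ negative for $N>6$, consistent with $2\alpha^\ast\in(\tfrac\pi2,\pi)$ — and (ii) the monotonicity statement that makes ``stays optimal until $P_r(N)$'' well-posed, namely that the iterate probabilities increase step by step through all of $\textbf{R}_1$ and have already reached $P_r(N)$ by the time the angle exits $\textbf{R}_1$, both of which follow directly from the rotation-by-$2\beta$ description above.
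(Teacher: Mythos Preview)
Your proposal is correct and follows the same approach as the paper: read the threshold angle $\alpha^\ast=\tfrac12\cot^{-1}\bigl(\tfrac{-N+6}{2\sqrt{N-1}}\bigr)$ off \Cref{th:RealVec} and then compute $\sin^2\alpha^\ast$ via the half-angle formula and the identity $(N-6)^2+4(N-1)=N^2-8N+32$. Your write-up is in fact more thorough than the paper's own proof, which simply asserts the final computation; you additionally justify why the iterate stays real (via the rotation-by-$2\beta$ description of the $\phi=\pi$ step) and check the monotonicity needed to make ``stays optimal until $P_r(N)$'' well-posed, both points the paper leaves implicit.
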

\begin{proof}
    Recall that starting from a Hadamard initial condition, of which $\theta = 0$, the optimal phase change for generalized Grover's algorithm stays $\pi$ (i.e. classical Grover's algorithm) until $\alpha$ reaches $\frac{1}{2}\cot^{-1}(\frac{-N+6}{2\sqrt{N-1}})$. In other words, the first non-trivial optimal phase differing from that of classical Grover's begins with $\alpha_0 \equiv \frac{1}{2}\cot^{-1}(\frac{-N+6}{2\sqrt{N-1}})$. Noting that the target probability is $\sin^2\alpha_0$, this shows that the optimal phase change stays with classical Grover's algorithm till the target probability reaches $P(\ket{\tau})=\Big[\sin(\frac{1}{2}\cot^{-1}(\frac{-N+6}{2\sqrt{N-1}}))\Big]^2$. We then see that
    \[P(\ket{\tau})=\sin^2\biggl(\frac{1}{2}\cot^{-1}\biggl(\frac{-N+6}{2\sqrt{N-1}}\biggr)\biggr) = \frac{1}{2}\biggl(1+\frac{N-6}{\sqrt{N^2-8N+32}}\biggr)\]
\end{proof}

\begin{remark}
    Note that the threshold $P_r(N) = \frac{1}{2}\bigl(1+\frac{N-6}{\sqrt{N^2-8N+32}}\bigr)$ converges to 1 as $N\to \infty$. One can see that that $P_r(N)$ converges to 1 in the order of $1/N$. This shows that the classical Grover's algorithm stays optimal till $P(\ket{\tau})$ gets close to 1 as $N$ grows large.
\end{remark}

\section{Conclusion \& Further Research}\label{sec6}

We have found that in the setting of generalized Grover's algorithm allowing a general phase change at each step of iteration, with a sufficiently large database, and an initialization set as the Hadamard gate, the optimal phase change stays $\pi$ for the vast majority of the iterations. The optimal phase change differs from $\pi$ as $P(\ket{\tau})$ gets quite close to 1. We also identify this threshold when the optimal phase change strays away from $\pi$ as a function of $N$, the size of the data space \eqref{eq:PhiMax}.  However, if the initial amplitude vector is complex, and we know how to describe it, then the optimal phase change depends non-trivially on the complexity of the target state's amplitude and it can be found by solving the equation \eqref{eq:Optimizaiton} at each iteration.

Moving forward there are a few areas of research we plan to investigate further. The first is identifying scenarios in which initializing with a non-Hadamard gate could become more beneficial for possible speed-up, and if those lead to amplitude vectors that are complex. If this occurs, then a generalized Grover's algorithm with a phase change that differs from $\pi$ and that non-trivially depends on the complexity of the amplitude vector would outperform $\phi$ = $\pi$ (The phase change used in the classical Grover's iterate) as seen in Figure \ref{fig:2}. A non-Hadamard initial gate would imply that there is some underlying knowledge of the data set that we can take advantage of so as to choose a specific class of probability distributions in the beginning, rather than a uniform one (as in Hadamard's gate).

The second is identifying if there is an improvement to be made if the algorithm is fully generalized in the sense that $\phi\neq\psi$ in $I^{\phi}_{\ket{\tau}}$ and $I^{\psi}_{\ket{0}}$. As mentioned previously, there has been much research in this area, but we wish to approach it from a similar perspective as this paper. We wish to see if there is any speedup or optimization to be made if considering the optimal phases at every step of the algorithm, as well as considering a fully generalized initial amplitude vector.

The third is better understanding of quantum random walks. Justified by numerous applications, classical random walks are an area of interest to us, and naturally we wish to understand their quantum counterparts, both as tools to search spatial data, and as stochastic objects in and of themselves. We look forward to pursue the above research directions to understand both unstructured and structured search algorithms arising in quantum computation.


\bibliography{sn-bibliography}

@article{BBBV,
  author		= "Bennet, C.H. and Bernstein, E. and Brassard, G. and Vazirani, U.",
  title			= "Strengths and weaknesses of quantum computing",
  journal		= "SIAM J. Comput.",
  volume		= "26",
  number		= "5",
  pages			= "1510-1523",
  year			= "1997",
  doi			= "10.1137/S0097539796300933"
}

@article{BBBGL,
  author		= "Biham, E. and Biham, O. and Biron, D. and Grassl, M. and Lidard, D.",
  title			= "Grover's quantum search algorithm for an arbitrary initial amplitude distribution",
  journal		= "Phys. Rev. A",
  volume		= "60",
  pages			= "2742-2745",
  year			= "1999",
  doi			= "10.1103/PhysRevA.60.2742"
}

@article{BK,
  author		= "Biham, E. and Kenigsberg, E.",
  title			= "Grover's quantum search algorithm for an arbitrary initial mixed state",
  journal		= "Phys. Rev. A",
  volume		= "66",
  year			= "2002",
  doi			= "10.1103/PhysRevA.66.062301"
}

@article{BBHT,
  author		= "Boyer, M. and Brassard, G. and H{\o}yer, P. and Tapp, A.",
  title			= "Tight Bounds on Quantum Searching",
  journal		= "Fortschr. Phys.",
  volume		= "46",
  pages			= "493-505",
  year			= "1998",
  doi			= "10.1002/(SICI)1521-3978(199806)46:4/5<493::AID-PROP493>3.0.CO;2-P"
}

@article{BHT,
  author		= "Brassard, G. and H{\o}yer, P. and Tapp, A.",
  title			= "Quantum cryptanalysis of hash and claw-free functions",
  journal		= "Third Latin American Symp. on Theoretical Informatics",
  pages			= "163-169",
  year			= "1998",
  doi			= "10.1007/BFb0054319"
}

@article{Gro1,
  author		= "Grover, L.K.",
  title			= "A fast quantum mechanical algorithm for database search",
  journal		= "STOC'96: Proceedings of the twenty-eighth annual ACM symposium on Theory of Computing",
  pages			= "212-219",
  year			= "1996",
  doi			= "10.1145/237814.237866"
}

@article{Gro2,
  author		= "Grover, L.K.",
  title			= "Quantum computers can search rapidly by using almost any transformation",
  journal		= "Phys. Rev. Lett.",
  volume		= "80",
  pages			= "4329-4332",
  year			= "1998",
  doi			= "10.1103/PhysRevLett.80.4329"
}

@article{Ho1,
  author		= "H{\o}yer, P.",
  title			= "Arbitrary phases in quantum amplitude amplification",
  journal		= "Phys. Rev. A",
  volume		= "62",
  year			= "2000",
  doi			= "10.1103/PhysRevA.62.052304"
}

@article{Lo1,
  author		= "Long, G.L.",
  title			= "Grover algorithm with zero theoretical failure rate",
  journal		= "Phys. Rev. A",
  volume		= "64",
  year			= "2001",
  doi			= "10.1103/PhysRevA.64.022307"
}

@article{QLX,
  author		= "Qiu, D. and Luo, L. and Xiao, L.",
  title			= "Distributed Grover's algorithm",
  journal		= "Theor. Comput. Sci.",
  volume		= "993",
  year			= "2024",
  doi			= "10.1016/j.tcs.2024.114461"
}

@article{RJS,
  author		= "Roy, T. and Jiang, L. and Schuste, D.I.",
  title			= "Deterministic Grover search with a restricted oracle",
  journal		= "Phys. Rev. Research",
  volume		= "4",
  year			= "2022",
  doi			= "10.1103/PhysRevResearch.4.L022013"
}

@article{Sz,
  author		= "Szab{\l}owski, P.",
  title			= "Understanding mathematics of Grover's algorithm",
  journal		= "Quantum Inf Process",
  volume		= "20",
  year			= "2021",
  doi			= "10.1007/s11128-021-03125-w"
}

@article{TDN,
  author		= "Toyama, F.M. and van Dijk, W. and Nogami, Y.",
  title			= "Quantum search with certainty based on modified Grover algorithms: optimum choice of parameters",
  journal		= "Quantum Inf Process",
  volume		= "12",
  year			= "2013",
  doi			= "10.1007/s11128-012-0498-0"
}

@article{LL,
  author		= "Li, P. and Li, S.",
  title			= "Phase matching in Grover's algorithm",
  journal		= "Physics Letters A",
  volume		= "366",
  year			= "2007",
  doi			= "10.1016/j.physleta.2007.02.029."
}

@article{ZZDW,
    author = "Zhong, S. and Zhao, Y. and Dai, G. and Wu, D.",
    title = "An optimized exact multi-target search algorithm",
    journal = "Quantum Inf Process",
    volume = "24",
    year = "2025",
    doi = "10.1007/s11128-025-04932-1"
}

@article{LQZY,
    author = "Li, D. and Qian, L. and Zhou, YQ. and Yang, YG.",
    title = "Quantum partial search algorithm with smaller oracles for multiple target items",
    journal = "Quantum Inf Process",
    volume = "21",
    year = "2022",
    doi = "10.1007/s11128-022-03496-8"
}

@article{GTP,
    author = "Gu{\c{t}}oiu, R.A. and T{\u{a}}n{\u{a}}sescu, A. and Popescu, P.G.",
    title = "Optimal number of queries for phase-matching quantum search",
    journal = "Quantum Inf Process",
    volume = "24",
    year = "2025",
    doi = "10.1007/s11128-025-04993-2"
}

\end{document}